\theoremstyle{plain}
\newtheorem{thm}{Theorem}[section]
\newtheorem{prop}[thm]{Proposition}
\theoremstyle{definition}
\newtheorem{defn}[thm]{Definition}
\newtheorem{exmp}[thm]{Example}
\numberwithin{figure}{section}
\numberwithin{table}{section}
\newcommand{\lspace} {
  \vspace{0.8\baselineskip}
}
\newcommand{\single}[1]{
  \langle  #1 \rangle 
}
\definecolor{arrowred}{rgb}{0,0,0} 
\newcommand{\newword}[1]{\textbf{\textit{#1}}}
\newcommand{\textred}[1]{#1}
\numberwithin{equation}{section}
\newcommand{\head}{\mathbf{head}}
\newcommand{\tail}{\mathbf{tail}}
\mathchardef\mhyphen="2D  
\title [Discrete Signature and its Application to Finance]{Discrete Signature\\and its Application to Finance}
\thanks{This work was supported by JSPS KAKENHI Grant Number 18K01551.}
\author[T. Adachi, Y. Naritomi]{Takanori Adachi and Yusuke Naritomi}
\address{Graduate School of Management, Tokyo Metropolitan University}
\email{Takanori Adachi <taka.adachi@tmu.ac.jp>}
\address{Graduate School of Management, Tokyo Metropolitan University}
\email{Yusuke Naritomi <naritomi-yusuke@ed.tmu.ac.jp>}
\date{\today}
\begin{document}

\maketitle




\begin{abstract}
Signatures, 
one of the key concepts of rough path theory, 
have recently gained prominence as a means to find appropriate feature sets 
in machine learning systems.

In this paper, 
in order to compute signatures directly from discrete data without going through the transformation to continuous data,
we introduced a discretized version of signatures, called "flat discrete signatures".
We showed that the flat discrete signatures can represent the quadratic variation that has a high relevance in financial applications.
We also introduced the concept of "discrete signatures" that is a generalization of "flat discrete signatures".
This concept is defined to reflect the fact that data closer to the current time is more important than older data, 
and is expected to be applied to time series analysis.

As an application of discrete signatures, 
we took up 
a stock market related problem 
and succeeded in performing 
a good estimation 
with fewer data points than before.
\end{abstract}

\section{Introduction}
\label{sec:introduction}

The signature,
one of the key concepts of rough path theory,
is recently considered as
a means to find an appropriate feature set in machine learning systems
\cite{CK_2016}.
It may become a powerful tool when combining with traditional machine learning techniques such as deep learning.
In this paper,
we introduce a new concept called discrete signatures, and apply it to some financial problems.

In Section \ref{sec:discSig},
we introduce a concept of flat discrete signatures
that is a simple discretization of the traditional signatures defined in
\cite{LCL2007},
but with the
$\head$-$\tail$ transformation
that is an enlargement method of the underlying alphabet set.
We show that the
$\head$-$\tail$ transformation,
just like
the lead-lag transformation of streams,
provides
the quadratic variation of 
any component of the original process.
This is important since
the quadratic variation has a high relevance in financial applications.
%
When applying flat discrete signatures to time-series analysis,
we often encounter the necessity of treating data closer to the present time 
as more important than older data.
In order to address this problem,
we generalize flat discrete signatures
to reflect the fact.
The resulting version is called discrete signatures.

In Section \ref{sec:ImpleDS},
we will make a brief explanation about how we implement the signatures.
Actually, an implementation of signatures was made by 
Patrick Kidger 
and Terry Lyons 
as a Python-usable library called Signatory workable with PyTorch, 
which is written in C++ \cite{kidger2021signatory}.
We will present yet another, 
but a very simple implementation using Python 
by adopting discrete signatures.

In Section \ref{sec:appFin},
as an example of applications of discrete signature to finance,
we consider the problem of judging 
whether a given price-shares process is of the morning or of the afternoon session in 
Tokyo Stock Exchange.
We make a logistic regression with components of discrete signatures as features or explanatory variables.
Then we will see that our result is as good as the regression with the whole raw data set
with much fewer data points.

\section{Discrete Signature}
\label{sec:discSig}

Throughout this paper, we fix the discrete time domain
\begin{equation}
\mathcal{T}
	:=
\{
	t_0, t_1, t_2, \cdots 
\} 
\end{equation}
with
\[
0 = t_0 < t_1 < \cdots < t_n < t_{n+1} < \cdots
\]
and a \textit{discrete} path 
$X$
in
$\mathbb{R}^d$
for some fixed positive integer
$d$,
which can be written like
\begin{equation}
\xymatrix@C=20 pt@R=25 pt{
    \mathcal{T}
        \ar @{->}^{\textred{X}} [rr]
        \ar @{}_{ \mathrel{ \rotatebox[origin=c]{90}{$\in$} } } @<+6pt> [d]
&&
    \mathbb{R}^d
        \ar @{}_{ \mathrel{ \rotatebox[origin=c]{90}{$\in$} } } @<+6pt> [d]
&
\\
    t
        \ar @{|->} [rr]
&&
    X_t
        \ar @{}_-{=} @<+6pt> [r]
&
	(X_t^1, \cdots, X_t^d) .
}
\end{equation}


\begin{defn}{[Word]}
\begin{align}
I
	&:=
\{ 1, 2, \cdots, d \} ,
	\\
I^*
	&:=
\bigcup_{k = 0}^{\infty}
	I^k .
\end{align}
We call 
an element of
$I$
an \newword{alphabet}
and
an element of
$I^*$
a \newword{word} 
or a \newword{multi-index}.

The unique element of
$I^0$
is denoted by
$\lambda$,
which is the word with length $0$, or the \newword{empty word}.

The 
\newword{concatenation}
of two words
$u \in I^j$
and
$v \in I^k$,
denoted by
$u \otimes v$,
is the word
$w \in I^{j+k}$
defined by
for
$i \in 
\{1, 2, \cdots, j+k\}
$,
\begin{equation}
w_i := \begin{cases}
	u_i
		\quad \textrm{if} \quad
	1 \le i \le j,
\\
	v_{i-j}
		\quad \textrm{if} \quad
	j+1 \le i \le j+k,
\end{cases}
\end{equation}
where
$w_i$
stands for
$w(i)$.
\end{defn}

We usually focus a finite subset of 
$I^*$ such as
\begin{equation}
\label{eq:IleK}
I^{\le k}
  :=
\bigcup_{\ell = 0}^k  I^{\ell} .
\end{equation}


First, we will see the traditional definition of (continuous) signatures.

\begin{defn}{[Signature \cite{LCL2007}]}
\label{defn:contSig}
Let
$\mathbb{R}_+$
be the continuous time domain
starting from $0$,
$I$
be an alphabet set
and
$
\tilde{X} : 
\mathbb{R}_+
	\to
\mathbb{R}^I
$
be a path.
Let
$
a, b \in
\mathbb{R}_+
$
with
$a < b$.
\begin{enumerate}
\item
For
$w \in I^*$,
$
S(\tilde{X})_{a, b}^w
	\in
\mathbb{R}
$
is defined inductively by
\begin{align}
S(\tilde{X})_{a, b}^{\lambda}
	&:=
1,
	\\
S(\tilde{X})_{a, b}^{w \otimes i}
	&:=
\int_a^b
	S(\tilde{X})_{a, t}^{w}
	d \tilde{X}^i_t
			\quad (\textrm{for } i \in I) ,
\end{align}
where
\begin{equation}
	d \tilde{X}^i_t
		:=
	\dot{\tilde{X}}^i_t
	dt .
\end{equation}

\item
The (traditional) \newword{signature} of
$\tilde{X}$
over
$[a, b]$
is a function
$
S(\tilde{X})_{a, b}
	:
I^* \to \mathbb{R}
$
defined by
\begin{equation}
S(\tilde{X})_{a, b}(w)
	:=
S(\tilde{X})_{a, b}^w
\end{equation}
for
$w \in I^*$.

\end{enumerate}
\end{defn}

Because we have a discrete path
$
X :
\mathcal{T}
	\to
\mathbb{R}^d
$,
we have to convert 
$X$
to
an appropriate continuous time path
$\tilde{X}$
before computing its signature.

One of the natural ways to accomplish this is an interpolation.
If we adopt the linear interpolation to fill the values 
between
$t_n$
and
$t_{n+1}$,
we have
for
$
t_n \le t < t_{n+1}
$
\begin{equation}
\tilde{X}^i_t
	:=
\frac{
	X^i_{t_n}
	(t_{n+1} - t)
		+
	X^i_{t_{n+1}}
	(t - t_n)
}{
	t_{n+1} - t_n
} .
\end{equation}
Then for
$
t_n \le t < t_{n+1}
$,
\begin{equation}
\dot{\tilde{X}}^i_t
	=
\frac{
	X^i_{t_{n+1}}
		-
	X^i_{t_n}
}{
	t_{n+1} - t_n
} .
\end{equation}
Therefore,
for
$m, n \in \mathbb{N}$
with
$m < n$,
and
$i \in I$ ,
\begin{align*}
S(\tilde{X})_{t_m, t_n}^{w \otimes i}
	&=
\int_{t_m}^{t_n}
	S(\tilde{X})_{t_m, t}^{w}
	\dot{\tilde{X}}^i_t
	d t
	=
\sum_{\ell=m}^{n-1}
\int_{t_{\ell}}^{t_{\ell+1}}
	S(\tilde{X})_{t_m, t}^{w}
	\dot{\tilde{X}}^i_t
	d t
	\\&=
\sum_{\ell=m}^{n-1}
\frac{
	X^i_{t_{\ell+1}} - X^i_{t_{\ell}}
}{
	t_{\ell+1} - t_{\ell}
} 
\int_{t_{\ell}}^{t_{\ell+1}}
	S(\tilde{X})_{t_m, t}^{w}
	d t
	=
\sum_{\ell=m}^{n-1}
\frac{
	X^i_{t_{\ell+1}} - X^i_{t_{\ell}}
}{
	t_{\ell+1} - t_{\ell}
} 
\tilde{S}_\ell
(t_{\ell+1} - t_{\ell})
	\\&=
\sum_{\ell=m}^{n-1}
	\tilde{S}_\ell
	(
		\tilde{X}^i_{t_{\ell+1}}
			-
		\tilde{X}^i_{t_{\ell}}
	), 
\end{align*}
for some value
$
\tilde{S}_\ell
$
that satisfies 
\begin{equation}
\tilde{S}_\ell
	\in 
\{
	S(\tilde{X})_{t_m, s}^{w}
\mid
	t_{\ell} \le s \le t_{\ell+1}
\} 
\end{equation}
by the mean-value theorem.

Note that some of candidates of 
$
\tilde{S}_\ell
$
are
\begin{equation}
\label{eq:candidates}
	S(\tilde{X})_{t_m, t_{\ell}}^{w},
\quad
	\frac{1}{2} \big(
		S(\tilde{X})_{t_m, t_{\ell}}^{w}
	+
		S(\tilde{X})_{t_m, t_{\ell+1}}^{w}
	\big),
\quad
	S(\tilde{X})_{t_m, t_{\ell+1}}^{w} .
\end{equation}

\begin{defn}
\label{defn:HeadTail}
For the alphabet set
$I$,
we define the 
\newword{extended}
alphabet set
$\bar{I}$
by
\begin{equation}
\label{eq:extendedI}
\bar{I}
	:=
I \times
\{ -, + \} .
\end{equation}
For an alphabet
$i \in I$,
we call the extended alphabets
$i^- := (i, -) \in \bar{I}$
and
$i^+ := (i, +) \in \bar{I}$
the $\head$
and
the $\tail$
of
$i$,
respectively.
\end{defn}

In the following definition,
we will assign
the first and the last
candidates
in
(\ref{eq:candidates})
to
$\head$s
and
$\tail$s.


\begin{defn}{[Flat Discrete Signature]}
\label{defn:discSig}
Let
$I$
be an alphabet set,
$
X :
\mathcal{T}
	\to
\mathbb{R}^I
$
be
a discrete path,
and
$m, n \in \mathbb{N}$
with
$m < n$.
\begin{enumerate}
\item
For
$w \in \bar{I}^*$
and
$i \in I$,
$
S(X)_{t_m, t_n}^w
	\in
\mathbb{R}
$
is defined inductively by
\begin{align}
\label{eq:discSigBase}
S(X)_{t_m, t_n}^{\lambda}
	&:=
1,
	\\
\label{eq:discSigMinus}
S(X)_{t_m, t_n}^{w \otimes i^-}
	&:=
\sum_{\ell=m}^{n-1}
	S(X)_{t_m, t_{\ell}}^{w}
	(
		X^i_{t_{\ell+1}}
			-
		X^i_{t_{\ell}}
	) ,
	\\
\label{eq:discSigPlus}
S(X)_{t_m, t_n}^{w \otimes i^+}
	&:=
\sum_{\ell=m}^{n-1}
	S(X)_{t_m, t_{\ell+1}}^{w}
	(
		X^i_{t_{\ell+1}}
			-
		X^i_{t_{\ell}}
	) .
\end{align}

\item
The \newword{flat discrete signature} of
$X$
over
$[t_m, t_n]$
is a function
$
S(X)_{t_m, t_n}
	:
\bar{I}^* \to \mathbb{R}
$
defined by
for
$w \in \bar{I}^*$,
\begin{equation}
S(X)_{t_m, t_n}(w)
	:=
S(X)_{t_m, t_n}^w .
\end{equation}

\end{enumerate}
\end{defn}


\begin{prop}
\label{prop:discSigOneTwo}
For
$i, i_1, i_2, i_3 \in I$,
$* \in \{-, +\}$
and
$m, n \in \mathbb{N}$
with
$m < n$,
\begin{align}
S(X)_{t_m, t_n}^{i^*}
	&=
\sum_{m \le \ell < n}
	(
		X^{i}_{t_{\ell+1}}
			-
		X^{i}_{t_{\ell}}
	) 
	=
X^i_{t_n}
	-
X^i_{t_m},
	\\
\label{eq:SXminus)}
S(X)_{t_m, t_n}^{i_1^* \otimes i_2^-}
	&=
\sum_{m \le \ell_1 < \ell_2 < n}
	(
		X^{i_1}_{t_{\ell_1+1}}
			-
		X^{i_1}_{t_{\ell_1}}
	) 
	(
		X^{i_2}_{t_{\ell_2+1}}
			-
		X^{i_2}_{t_{\ell_2}}
	) ,
	\\
\label{eq:SXplus)}
S(X)_{t_m, t_n}^{i_1^* \otimes i_2^+}
	&=
\sum_{m \le \ell_1 \le \ell_2 < n}
	(
		X^{i_1}_{t_{\ell_1+1}}
			-
		X^{i_1}_{t_{\ell_1}}
	) 
	(
		X^{i_2}_{t_{\ell_2+1}}
			-
		X^{i_2}_{t_{\ell_2}}
	) ,
	\\
S(X)_{t_m, t_n}^{i_1^* \otimes i_2^- \otimes i_3^-}
	&=
\sum_{m \le \ell_1 < \ell_2 < \ell_3< n}
	(
		X^{i_1}_{t_{\ell_1+1}}
			-
		X^{i_1}_{t_{\ell_1}}
	) 
	(
		X^{i_2}_{t_{\ell_2+1}}
			-
		X^{i_2}_{t_{\ell_2}}
	) 
	(
		X^{i_3}_{t_{\ell_3+1}}
			-
		X^{i_3}_{t_{\ell_3}}
	) ,
	\\
S(X)_{t_m, t_n}^{i_1^* \otimes i_2^- \otimes i_3^+}
	&=
\sum_{m \le \ell_1 < \ell_2 \le \ell_3< n}
	(
		X^{i_1}_{t_{\ell_1+1}}
			-
		X^{i_1}_{t_{\ell_1}}
	) 
	(
		X^{i_2}_{t_{\ell_2+1}}
			-
		X^{i_2}_{t_{\ell_2}}
	) 
	(
		X^{i_3}_{t_{\ell_3+1}}
			-
		X^{i_3}_{t_{\ell_3}}
	) ,
	\\
S(X)_{t_m, t_n}^{i_1^* \otimes i_2^+ \otimes i_3^-}
	&=
\sum_{m \le \ell_1 \le \ell_2 < \ell_3< n}
	(
		X^{i_1}_{t_{\ell_1+1}}
			-
		X^{i_1}_{t_{\ell_1}}
	) 
	(
		X^{i_2}_{t_{\ell_2+1}}
			-
		X^{i_2}_{t_{\ell_2}}
	) 
	(
		X^{i_3}_{t_{\ell_3+1}}
			-
		X^{i_3}_{t_{\ell_3}}
	) ,
	\\
S(X)_{t_m, t_n}^{i_1^* \otimes i_2^+ \otimes i_3^+}
	&=
\sum_{m \le \ell_1 \le \ell_2 \le \ell_3< n}
	(
		X^{i_1}_{t_{\ell_1+1}}
			-
		X^{i_1}_{t_{\ell_1}}
	) 
	(
		X^{i_2}_{t_{\ell_2+1}}
			-
		X^{i_2}_{t_{\ell_2}}
	) 
	(
		X^{i_3}_{t_{\ell_3+1}}
			-
		X^{i_3}_{t_{\ell_3}}
	) .
\end{align}
\end{prop}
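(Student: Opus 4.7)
The plan is to prove all seven identities by induction on the length of the word, in each case unfolding the defining recursion (\ref{eq:discSigMinus})--(\ref{eq:discSigPlus}) at the rightmost letter and substituting the inductive hypothesis. First I would dispatch the base case: taking $w = \lambda$ in either recursion and using $S(X)_{t_m,t_\ell}^{\lambda} = S(X)_{t_m,t_{\ell+1}}^{\lambda} = 1$ collapses both the $-$ and $+$ cases to the same telescoping sum $\sum_{m \le \ell < n}(X^i_{t_{\ell+1}} - X^i_{t_\ell}) = X^i_{t_n} - X^i_{t_m}$. This already explains why the exponent $*$ on the first letter is irrelevant throughout: at the innermost stage the signature of the empty word is $1$ regardless of sign.

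For the inductive step I would observe the mechanism that governs the inequalities. Suppose the identity has been established for a word $u$ of length $k-1$, as a sum indexed by $m \le \ell_1 \star_1 \ell_2 \star_2 \cdots \star_{k-2} \ell_{k-1} < \ell$ where each $\star_j$ is $<$ or $\le$ depending on whether the $(j{+}1)$-st letter of $u$ carries sign $-$ or $+$. If the next letter we append is $i^-$, then by (\ref{eq:discSigMinus}),
\[
S(X)_{t_m,t_n}^{u \otimes i^-}
  = \sum_{\ell_k = m}^{n-1} S(X)_{t_m,t_{\ell_k}}^u\,(X^i_{t_{\ell_k+1}} - X^i_{t_{\ell_k}}),
\]
and the inductive hypothesis turns the inner factor into a sum with $\ell_{k-1} < \ell_k$, giving the strict inequality that the sign $-$ prescribes. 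If instead the letter appended is $i^+$, then (\ref{eq:discSigPlus}) uses $S(X)_{t_m,t_{\ell_k+1}}^u$, so the inner sum requires $\ell_{k-1} < \ell_k + 1$, i.e.\ $\ell_{k-1} \le \ell_k$, producing the weak inequality. Thus each appended letter contributes exactly one inequality between adjacent summation indices, strict for $-$ and weak for $+$, while the outer sum always contributes $\ell_k < n$ and the leftmost index always contributes $m \le \ell_1$.

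Applying this step once to the base case yields the two length-$2$ identities (\ref{eq:SXminus)}) and (\ref{eq:SXplus)}); applying it twice yields the four length-$3$ identities. There is no real obstacle in this proposition; the content is entirely the pattern just described, and the only point needing care is the index shift $t_{\ell_k} \mapsto t_{\ell_k+1}$ in the $+$-recursion, which is precisely what converts a strict inequality in the inductive hypothesis into a weak one in the conclusion. The statement of the proposition can therefore be read as a catalogue of all sign combinations of length at most three, each verified by the same two-line computation.
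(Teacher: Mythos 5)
Your proof is correct and is exactly the ``straightforward'' unfolding the paper has in mind: the paper gives no argument beyond that label, and your induction on word length --- with the rightmost-letter recursion (\ref{eq:discSigMinus})--(\ref{eq:discSigPlus}) converting the sign $-$/$+$ into a strict/weak inequality via the shift $t_{\ell_k}\mapsto t_{\ell_k+1}$ --- is the intended computation. The only implicit point worth noting is the convention $S(X)_{t_m,t_m}^{w}=0$ for nonempty $w$ (as made explicit in (\ref{eq:weightSigBase}) for the weighted case), which your reading of the $\ell_k=m$ term as an empty sum already handles.
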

\begin{proof}
Straightforward.
\end{proof}

You may notice the correspondence between 
$\{-, +\}$
and
$\{<, \le \}$
in the ranges of summations in 
Proposition \ref{prop:discSigOneTwo}.

\begin{exmp}
\label{exmp:sampleOne}
Suppose that we observed 
2 dimensional data
in Table \ref{tbl:InDataStram}
with
$I = \{ 1, 2\}$.

\begin{table}[h]
\caption{Input data stream}
\label{tbl:InDataStram}
\begin{center}
\begin{tabular}{l|r|r|r|r|r}
$t$ & 0 & 1 & 1.5 & 2.5 & 3 \\ \hline
$X^1$ & 1 & 3 & & 5 & 8 \\
$X^2$ & 1 & 4 & 2 & & 6 \\ 
\end{tabular}
\end{center}
\end{table}

We will fill the missing data in 
in Table \ref{tbl:InDataStram}
with their latest values like the data
in
Table \ref{tbl:FilledDataStram}.

\begin{table}[h]
\caption{Filled data stream}
\label{tbl:FilledDataStram}
\begin{center}
\begin{tabular}{l|r|r|r|r|r}
$t$ & 0 & 1 & 1.5 & 2.5 & 3 \\ \hline
$X^1$ & 1 & 3 & \textbf{3} & 5 & 8 \\
$X^2$ & 1 & 4 & 2 & \textbf{2} & 6 \\ 
\end{tabular}
\end{center}
\end{table}

Then, 
the initial segment of the signature
$S(X)_{0,3}$
whose words length is less than or equal to 2,
has the following values,
where 
$* \in \{-, +\}$.
\begin{align*}
&S(X)_{0,3}^{\lambda}	= 1, \\
&S(X)_{0,3}^{1^*}	= 7, 
	\quad
&S(X)_{0,3}^{2^*}	= 5, \quad& \quad& \\
&S(X)_{0,3}^{1^*1^-}	= 16, 
	\quad
&S(X)_{0,3}^{1^*1^+}	= 33, 
	\quad
&S(X)_{0,3}^{1^*2^-}	= 12, 
	\quad
&S(X)_{0,3}^{1^*2^+}	= 30, \\
&S(X)_{0,3}^{2^*1^-}	= 5, 
	\quad
&S(X)_{0,3}^{2^*1^+}	= 23, 
	\quad
&S(X)_{0,3}^{2^*2^-}	= -2, 
	\quad
&S(X)_{0,3}^{2^*2^+}	= 27.
\end{align*}

\end{exmp}

In \cite{GLKF_2014},
the quadratic variation of 
any component of the original process
$X$
is provided by introducing
the lead-lag  transformation of streams.
Since 
the quadratic variation has a high relevance in financial applications,
this result was crucial.

The following theorem shows that
our $\head$-$\tail$ transformation also provides a similar functionality.

\begin{thm}
\label{thm:quandraticVal}
For
$i \in I$,
$* \in \{-, +\}$
and
$m, n \in \mathbb{N}$
with
$m < n$,
\begin{equation}
\label{eq:quandraticVal}
S(X)_{t_m, t_n}^{i^{*} \otimes i^+}
	-
S(X)_{t_m, t_n}^{i^{*} \otimes i^-}
	=
\sum_{m \le \ell < n}
	(
		X^{i}_{t_{\ell+1}}
			-
		X^{i}_{t_{\ell}}
	)^2  .
\end{equation}
\end{thm}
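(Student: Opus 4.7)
The plan is to derive the identity directly from the explicit summation formulas already established in Proposition \ref{prop:discSigOneTwo}, specifically equations (\ref{eq:SXminus)}) and (\ref{eq:SXplus)}). Setting $i_1 = i_2 = i$ in those two formulas, both sides of (\ref{eq:quandraticVal}) become double sums over pairs $(\ell_1, \ell_2)$ with $m \le \ell_1, \ell_2 < n$ of the product $(X^{i}_{t_{\ell_1+1}} - X^{i}_{t_{\ell_1}})(X^{i}_{t_{\ell_2+1}} - X^{i}_{t_{\ell_2}})$, and the only difference is the constraint $\ell_1 \le \ell_2$ versus $\ell_1 < \ell_2$.

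The key observation is that the symmetric difference of the index sets $\{(\ell_1, \ell_2) : m \le \ell_1 \le \ell_2 < n\}$ and $\{(\ell_1, \ell_2) : m \le \ell_1 < \ell_2 < n\}$ is precisely the diagonal $\{(\ell, \ell) : m \le \ell < n\}$. Therefore, subtracting the first expression from the second leaves exactly the diagonal contributions, where the product of increments becomes $(X^{i}_{t_{\ell+1}} - X^{i}_{t_{\ell}})^2$, yielding the right-hand side of (\ref{eq:quandraticVal}). Note that the head/tail decoration $*$ on the first letter $i_1^{*}$ plays no role here, since both summation formulas have the identical $\ell_1$-range $m \le \ell_1$ regardless of whether $*$ is $-$ or $+$; the asymmetry between head and tail only appears on the last letter.

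There is essentially no obstacle: the proof is a one-line set-theoretic comparison of index ranges once Proposition \ref{prop:discSigOneTwo} is invoked. The only item worth being careful about is confirming that the $*$ on $i_1^{*}$ truly does not affect the range of $\ell_1$ (it does not, because both (\ref{eq:SXminus)}) and (\ref{eq:SXplus)}) are stated uniformly in $*$), so one does not need to split into cases $* = -$ and $* = +$. Thus the whole argument fits in a few lines: quote the two formulas from Proposition \ref{prop:discSigOneTwo} with $i_1 = i_2 = i$, subtract, and identify the residual as the sum of squared increments.
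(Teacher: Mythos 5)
Your proposal is correct and follows essentially the same route as the paper's own proof: both invoke (\ref{eq:SXminus)}) and (\ref{eq:SXplus)}) from Proposition \ref{prop:discSigOneTwo} with $i_1 = i_2 = i$, subtract, and observe that only the diagonal terms $\ell_1 = \ell_2$ survive, giving the sum of squared increments. Your added remark that the decoration $*$ on the first letter does not affect the $\ell_1$-range is a correct and sensible point of care, implicitly used in the paper as well.
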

\begin{proof}
By
(\ref{eq:SXminus)})
and
(\ref{eq:SXplus)}),
we have
\begin{align*}
&
S(X)_{t_m, t_n}^{i^{*} \otimes i^+}
	-
S(X)_{t_m, t_n}^{i^{*} \otimes i^-}
	\\=&
\sum_{m \le \ell_1 \le \ell_2 < n}
	(
		X^{i}_{t_{\ell_1+1}}
			-
		X^{i}_{t_{\ell_1}}
	) 
	(
		X^{i}_{t_{\ell_2+1}}
			-
		X^{i}_{t_{\ell_2}}
	)
		-
\sum_{m \le \ell_1 < \ell_2 < n}
	(
		X^{i}_{t_{\ell_1+1}}
			-
		X^{i}_{t_{\ell_1}}
	) 
	(
		X^{i}_{t_{\ell_2+1}}
			-
		X^{i}_{t_{\ell_2}}
	)
	\\=&
\sum_{m \le \ell_1 = \ell_2 < n}
	(
		X^{i}_{t_{\ell_1+1}}
			-
		X^{i}_{t_{\ell_1}}
	) 
	(
		X^{i}_{t_{\ell_2+1}}
			-
		X^{i}_{t_{\ell_2}}
	)
	=
\sum_{m \le \ell < n}
	(
		X^{i}_{t_{\ell+1}}
			-
		X^{i}_{t_{\ell}}
	)^2  .
\end{align*}
\end{proof}

When applying signatures to time-series analysis,
we often encounter the necessity of treating data closer to the present time 
as more important than older data.
Let us think to generalize flat discrete signatures to reflect the fact.

Now 
for $m < n$,
we can rewrite 
(\ref{eq:discSigPlus}) 
as follows.
\begin{equation}
S(X)_{t_m, t_n}^{w \otimes i^+}
		=
\sum_{\ell=m}^{n-1}
	(
		X^i_{t_{\ell+1}}
			-
		X^i_{t_{\ell}}
	) 
	S(X)_{t_m, t_{\ell+1}}^{w}
		=
	\label{eq:SXdividedOldNew}
S(X)_{t_m, t_{n-1}}^{w \otimes i^+}
	+
(
	X^i_{t_{n}}
		-
	X^i_{t_{n-1}}
) 
S(X)_{t_m, t_{n}}^{w} .
\end{equation}
We can read
	(\ref{eq:SXdividedOldNew})
as
``First
$ S(X)_{t_m, t_{n-1}}^{w \otimes i^+} $
is computed at time $t_{n-1}$,
and then 
$(t_n - t_{n-1})$
later,
$ S(X)_{t_m, t_{n}}^{w} $
and
$ S(X)_{t_m, t_{n}}^{w \otimes i^+} $
are calculated using the (slightly outdated)
$ S(X)_{t_m, t_{n-1}}^{w \otimes i^+} $''.

Similarly,
we can rewrite
(\ref{eq:discSigMinus}) 
as follows.
\begin{equation}
S(X)_{t_m, t_n}^{w \otimes i^-}
	=
\sum_{\ell=m}^{n-1}
	(
		X^i_{t_{\ell+1}}
			-
		X^i_{t_{\ell}}
	) 
	S(X)_{t_m, t_{\ell}}^{w}
	=
	\label{eq:SXdividedOldNewMinus}
S(X)_{t_m, t_{n-1}}^{w \otimes i^-}
	+
(
	X^i_{t_{n}}
		-
	X^i_{t_{n-1}}
) 
S(X)_{t_m, t_{n-1}}^{w} .
\end{equation}
This time, we can read
	(\ref{eq:SXdividedOldNewMinus})
as
``First
$ S(X)_{t_m, t_{n-1}}^{w \otimes i^-} $
and
$ S(X)_{t_m, t_{n-1}}^{w} $
are computed at time $t_{n-1}$,
and then 
$(t_n - t_{n-1})$
later,
$ S(X)_{t_m, t_{n}}^{w \otimes i^+} $
is calculated using the (slightly outdated)
$ S(X)_{t_m, t_{n-1}}^{w \otimes i^-} $
and
$ S(X)_{t_m, t_{n-1}}^{w} $''.

In the following definition, a generalized version  of flat discrete signatures
is defined
by calculating
the outdated terms
with a weight of $1$ or less,
taking into account the elapsed time.

\begin{defn}{[discrete Signature]}
\label{defn:weightedSig}
Let
$I$
be an alphabet set,
$
X :
\mathcal{T}
	\to
\mathbb{R}^I
$
be
a discrete path,
$m, n \in \mathbb{N}$
with
$m < n$,
and
$\mu \ge 0$.
\begin{enumerate}
\item
For
$w \in \bar{I}^*$
and
$i \in I$,
$
S^{\mu}(X)_{t_m, t_n}^w
	\in
\mathbb{R}
$
is defined inductively by
\begin{align}
\label{eq:lambdaSigBase}
S^{\mu}(X)_{t_m, t_n}^{\lambda}
	&:=
1,
	\\
\label{eq:weightSigBase}
S^{\mu}(X)_{t_m, t_m}^{w}
	&:= \begin{cases}
		1
		&\textrm{if} \;
		w = \lambda,
			\\
		0
		&\textrm{otherwise},
\end{cases}
	\\
\label{eq:weightSigMinus}
S^{\mu}(X)_{t_m, t_n}^{w \otimes i^-}
	&:=
e^{- \mu (t_n - t_{n-1})}
\big(
S^{\mu}(X)_{t_m, t_{n-1}}^{w \otimes i^-}
	+
(X^i_{t_n} - X^i_{t_{n-1}})
S^{\mu}(X)^w_{t_m, t_{n-1}}
\big),
	\\
\label{eq:weightSigPlus}
S^{\mu}(X)_{t_m, t_n}^{w \otimes i^+}
	&:=
e^{- \mu (t_n - t_{n-1})}
S^{\mu}(X)_{t_m, t_{n-1}}^{w \otimes i^+}
	+
(X^i_{t_n} - X^i_{t_{n-1}})
S^{\mu}(X)^w_{t_m, t_{n}}.
\end{align}

\item
The \newword{discrete signature} of
$X$
with the decay rate
$\mu$
over
$[t_m, t_n]$
is a function
$
S^{\mu}(X)_{t_m, t_n}
	:
\bar{I}^* \to \mathbb{R}
$
defined by
for
$w \in \bar{I}^*$,
\begin{equation}
S^{\mu}(X)_{t_m, t_n}(w)
	:=
S^{\mu}(X)_{t_m, t_n}^w .
\end{equation}

\end{enumerate}
\end{defn}

Note that
$
S^{0}(X)_{t_m, t_n}
	=
S(X)_{t_m, t_n}.
$

\begin{prop}
\label{prop:weightedSigOneTwo}
For
$i, i_1, i_2 \in I$,
$m, n \in \mathbb{N}$
with
$m < n$,
and
$\mu > 0$,
\begin{align}
\label{eq:SmuXm)}
S^{\mu}(X)_{t_m, t_n}^{i^-}
	&=
\sum_{m \le \ell < n}
	e^{
		- \mu
		(t_n - t_{\ell})
	}
	(
		X^{i}_{t_{\ell+1}}
			-
		X^{i}_{t_{\ell}}
	) ,
	\\
\label{eq:SmuXp)}
S^{\mu}(X)_{t_m, t_n}^{i^+}
	&=
\sum_{m \le \ell < n}
	e^{
		- \mu
		(t_n - t_{\ell+1})
	}
	(
		X^{i}_{t_{\ell+1}}
			-
		X^{i}_{t_{\ell}}
	) ,
	\\
\label{eq:SmuXmm)}
S^{\mu}(X)_{t_m, t_n}^{i_1^- \otimes i_2^-}
	&=
\sum_{m \le \ell_1 < \ell_2 < n}
	e^{
		- \mu
		(t_{n} - t_{\ell_1})
	}
	(
		X^{i_1}_{t_{\ell_1+1}}
			-
		X^{i_1}_{t_{\ell_1}}
	) 
	(
		X^{i_2}_{t_{\ell_2+1}}
			-
		X^{i_2}_{t_{\ell_2}}
	) ,
	\\
\label{eq:SmuXmp)}
S^{\mu}(X)_{t_m, t_n}^{i_1^- \otimes i_2^+}
	&=
\sum_{m \le \ell_1 \le \ell_2 < n}
	e^{
		- \mu
		(t_{n} - t_{\ell_1})
	}
	(
		X^{i_1}_{t_{\ell_1+1}}
			-
		X^{i_1}_{t_{\ell_1}}
	) 
	(
		X^{i_2}_{t_{\ell_2+1}}
			-
		X^{i_2}_{t_{\ell_2}}
	) ,
	\\
\label{eq:SmuXpm)}
S^{\mu}(X)_{t_m, t_n}^{i_1^+ \otimes i_2^-}
	&=
\sum_{m \le \ell_1 < \ell_2 < n}
	e^{
		- \mu
		(t_{n} - t_{\ell_1+1})
	}
	(
		X^{i_1}_{t_{\ell_1+1}}
			-
		X^{i_1}_{t_{\ell_1}}
	) 
	(
		X^{i_2}_{t_{\ell_2+1}}
			-
		X^{i_2}_{t_{\ell_2}}
	) ,
	\\
\label{eq:SmuXpp)}
S^{\mu}(X)_{t_m, t_n}^{i_1^+ \otimes i_2^+}
	&=
\sum_{m \le \ell_1 \le \ell_2 < n}
	e^{
		- \mu
		(t_{n} - t_{\ell_1+1})
	}
	(
		X^{i_1}_{t_{\ell_1+1}}
			-
		X^{i_1}_{t_{\ell_1}}
	) 
	(
		X^{i_2}_{t_{\ell_2+1}}
			-
		X^{i_2}_{t_{\ell_2}}
	) .
\end{align}
\end{prop}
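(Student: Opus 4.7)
The plan is to prove all six identities by induction on $n \ge m+1$, reading each statement directly off the recurrences in Definition \ref{defn:weightedSig}. Nothing subtle happens analytically; the only care needed is threading the right length-one identity into each length-two induction step and keeping track of which endpoint of each increment absorbs the decay factor.

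First I would dispose of the two length-one formulas (\ref{eq:SmuXm)}) and (\ref{eq:SmuXp)}). With $w=\lambda$ one has $S^\mu(X)^\lambda \equiv 1$, so (\ref{eq:weightSigMinus}) and (\ref{eq:weightSigPlus}) collapse to
\begin{align*}
S^\mu(X)^{i^-}_{t_m,t_n} &= e^{-\mu(t_n-t_{n-1})}\bigl(S^\mu(X)^{i^-}_{t_m,t_{n-1}} + X^i_{t_n}-X^i_{t_{n-1}}\bigr), \\
S^\mu(X)^{i^+}_{t_m,t_n} &= e^{-\mu(t_n-t_{n-1})}\,S^\mu(X)^{i^+}_{t_m,t_{n-1}} + (X^i_{t_n}-X^i_{t_{n-1}}).
\end{align*}
The base case $n=m+1$ is immediate from (\ref{eq:weightSigBase}). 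The induction step merges $e^{-\mu(t_n-t_{n-1})}$ with the accumulated exponents $e^{-\mu(t_{n-1}-t_\ell)}$ (in the $-$ case) or $e^{-\mu(t_{n-1}-t_{\ell+1})}$ (in the $+$ case) via the identity $t_n - t_\ell = (t_n-t_{n-1})+(t_{n-1}-t_\ell)$, and simply appends the new summand at $\ell=n-1$.

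For the four length-two formulas I again induct on $n$, this time with $w = i_1^-$ or $w = i_1^+$. The structure splits cleanly by the sign of the last letter. When the outer letter is $i_2^-$, namely in (\ref{eq:SmuXmm)}) and (\ref{eq:SmuXpm)}), the recurrence (\ref{eq:weightSigMinus}) multiplies everything by $e^{-\mu(t_n-t_{n-1})}$ and adds one new summand in which the length-one value at the \emph{previous} time $t_{n-1}$ is multiplied by the increment $(X^{i_2}_{t_n} - X^{i_2}_{t_{n-1}})$. Substituting the already-proven length-one formula produces exactly the terms with $\ell_2 = n-1$ and $\ell_1$ ranging over $\ell_1 < n-1$, i.e.\ the \emph{strict} inequality $\ell_1 < \ell_2$ demanded by (\ref{eq:SmuXmm)})--(\ref{eq:SmuXpm)}). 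When the outer letter is $i_2^+$, in (\ref{eq:SmuXmp)}) and (\ref{eq:SmuXpp)}), the recurrence (\ref{eq:weightSigPlus}) instead multiplies by the length-one value at the \emph{current} time $t_n$; substituting the length-one formula at time $t_n$ yields terms with $\ell_2 = n-1$ but $\ell_1$ ranging over $\ell_1 \le n-1$, giving the \emph{weak} inequality $\ell_1 \le \ell_2$. This is precisely the $\{-,+\}\leftrightarrow\{<,\le\}$ correspondence noted after Proposition \ref{prop:discSigOneTwo}.

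The only obstacle is notational bookkeeping: one must check that a letter carrying sign $-$ contributes an exponent anchored at the left endpoint, $e^{-\mu(t_n-t_{\ell_k})}$, while a letter carrying sign $+$ contributes one anchored at the right endpoint, $e^{-\mu(t_n-t_{\ell_k+1})}$, and that the prefactor $e^{-\mu(t_n-t_{n-1})}$ correctly rebases every previously accumulated exponent from reference time $t_{n-1}$ to $t_n$. Once this is verified carefully in one representative length-one case and one length-two case, the remaining cases follow by an identical template and I would not belabor them.
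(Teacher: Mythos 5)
Your proof is correct and follows exactly the paper's route: the paper's own proof is simply ``by induction on $n$'' using the recurrences (\ref{eq:weightSigBase})--(\ref{eq:weightSigPlus}), and your write-up fills in that induction with the correct rebasing of exponents via $t_n-t_\ell=(t_n-t_{n-1})+(t_{n-1}-t_\ell)$ and the correct $\{-,+\}\leftrightarrow\{<,\le\}$ bookkeeping. No discrepancy to report.
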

\begin{proof}
By induction on $n$.
\end{proof}

We have a similar result as 
Theorem \ref{thm:quandraticVal}
for discrete signatures,
which tells that
discrete signatures can represent ``weighted'' quadratic variations.
Actually, the result is a generalization of 
Theorem \ref{thm:quandraticVal}.

\begin{thm}
\label{thm:quandraticValMu}
For
$i \in I$,
and
$m, n \in \mathbb{N}$
with
$m < n$,
\begin{align}
\label{eq:quandraticValm}
S^{\mu}(X)_{t_m, t_n}^{i^- \otimes i^+}
	-
S^{\mu}(X)_{t_m, t_n}^{i^- \otimes i^-}
	&=
\sum_{m \le \ell < n}
	e^{
		-\mu
		(t_n - t_{\ell})
	}
	(
		X^{i}_{t_{\ell+1}}
			-
		X^{i}_{t_{\ell}}
	)^2  ,
\\
\label{eq:quandraticValp}
S^{\mu}(X)_{t_m, t_n}^{i^+ \otimes i^+}
	-
S^{\mu}(X)_{t_m, t_n}^{i^+ \otimes i^-}
	&=
\sum_{m \le \ell < n}
	e^{
		-\mu
		(t_n - t_{\ell+1})
	}
	(
		X^{i}_{t_{\ell+1}}
			-
		X^{i}_{t_{\ell}}
	)^2  .
\end{align}
\end{thm}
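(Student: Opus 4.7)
The plan is to derive both identities as immediate corollaries of Proposition~\ref{prop:weightedSigOneTwo}, in exact parallel with the proof of Theorem~\ref{thm:quandraticVal}. The key observation is that, in the explicit double-sum formulas (\ref{eq:SmuXmm)})--(\ref{eq:SmuXpp)}), the exponential weight depends only on the outer index $\ell_1$ (either through $t_{\ell_1}$ or through $t_{\ell_1+1}$), never on $\ell_2$. Therefore, when we subtract the ``$<$'' version from the ``$\le$'' version with $i_1 = i_2 = i$, the weight factors out cleanly and only the diagonal terms survive.

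Concretely, for the first identity I would substitute $i_1 = i_2 = i$ into (\ref{eq:SmuXmp)}) and (\ref{eq:SmuXmm)}) and take the difference. Since both sums carry the same weight $e^{-\mu(t_n - t_{\ell_1})}$ and the same product of increments, the subtraction reduces the summation domain from $\{m \le \ell_1 \le \ell_2 < n\}$ to the diagonal $\{m \le \ell_1 = \ell_2 < n\}$. Reindexing this diagonal by a single variable $\ell$ and noting that the product of increments becomes a square yields the right-hand side of (\ref{eq:quandraticValm}). The second identity follows in exactly the same way from (\ref{eq:SmuXpp)}) and (\ref{eq:SmuXpm)}); the weight is now $e^{-\mu(t_n - t_{\ell_1+1})}$, which accounts for the shifted exponent $t_{\ell+1}$ appearing on the right-hand side of (\ref{eq:quandraticValp}).

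There is essentially no obstacle: once Proposition~\ref{prop:weightedSigOneTwo} is in hand the argument is two lines of set-subtraction, and the consistency check $\mu = 0$ recovers Theorem~\ref{thm:quandraticVal}. The only thing worth flagging is the structural reason this works, namely that in our definition of $S^{\mu}$ the decay factor $e^{-\mu(t_n - t_{n-1})}$ is applied uniformly to the ``previous'' signature value in both (\ref{eq:weightSigMinus}) and (\ref{eq:weightSigPlus}) before the new increment is multiplied in. This uniformity is precisely what makes the weights in the ``$-$'' and ``$+$'' variants agree on the outer index, and hence makes the cancellation clean on the off-diagonal while leaving a properly weighted squared-increment on the diagonal.
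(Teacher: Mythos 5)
Your proposal is correct and follows exactly the route the paper takes: its proof of Theorem~\ref{thm:quandraticValMu} is simply to repeat the diagonal-cancellation argument of Theorem~\ref{thm:quandraticVal}, using the explicit weighted double-sum formulas of Proposition~\ref{prop:weightedSigOneTwo}, where the weight depends only on $\ell_1$ and hence factors out so that subtracting the ``$<$'' sum from the ``$\le$'' sum leaves only the diagonal squared increments with weight $e^{-\mu(t_n-t_{\ell})}$ or $e^{-\mu(t_n-t_{\ell+1})}$. No gaps; the $\mu=0$ consistency check you mention is also the paper's own observation that $S^{0}(X)_{t_m,t_n}=S(X)_{t_m,t_n}$.
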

\begin{proof}
The proof is
exactly same as
that of
Theorem \ref{thm:quandraticVal},
by using
Proposition \ref{prop:weightedSigOneTwo}.
\end{proof}

\begin{exmp}
\label{exmp:sampleTwo}
Using the same data in Example \ref{exmp:sampleOne},
the initial segment of the discrete signature
$S^{\mu}(X)_{0,3}$
with the decay rate
$\mu = \log 2 \fallingdotseq 0.693$ (half-life $= 1$)
whose words length is less than or equal to 2,
has the following values.
\begin{align*}
&S^{\mu}(X)_{0,3}^{\lambda}	= 1,  \\
&S^{\mu}(X)_{0,3}^{1^-}	= 3.08, 
	\quad
&S^{\mu}(X)_{0,3}^{1^+}	= 4.91, 
	\quad
&S^{\mu}(X)_{0,3}^{2^-}	= 2.70, 
	\quad
&S^{\mu}(X)_{0,3}^{2^+}	= 4.04, \\
&S^{\mu}(X)_{0,3}^{1^-1^-}	= 3.37, 
	\quad
&S^{\mu}(X)_{0,3}^{1^-1^+}	= 11.65, 
	\quad
&S^{\mu}(X)_{0,3}^{1^-2^-}	= 3.33, 
	\quad
&S^{\mu}(X)_{0,3}^{1^-2^+}	= 12.56, \\
&S^{\mu}(X)_{0,3}^{1^+1^-}	= 6.74, 
	\quad
&S^{\mu}(X)_{0,3}^{1^+1^+}	= 19.57, 
	\quad
&S^{\mu}(X)_{0,3}^{1^+2^-}	= 6.66, 
	\quad
&S^{\mu}(X)_{0,3}^{1^+2^+}	= 20.16, \\
&S^{\mu}(X)_{0,3}^{2^-1^-}	= -0.63, 
	\quad
&S^{\mu}(X)_{0,3}^{2^-1^+}	= 8.61, 
	\quad
&S^{\mu}(X)_{0,3}^{2^-2^-}	= -1.25, 
	\quad
&S^{\mu}(X)_{0,3}^{2^-2^+}	= 12.19, \\
&S^{\mu}(X)_{0,3}^{2^+1^-}	= 0.21, 
	\quad
&S^{\mu}(X)_{0,3}^{2^+1^+}	= 13.71, 
	\quad
&S^{\mu}(X)_{0,3}^{2^+2^-}	= -1.33, 
	\quad
&S^{\mu}(X)_{0,3}^{2^+2^+}	= 18.34. 
\end{align*}

\end{exmp}

\section{An implementation of discrete signature}
\label{sec:ImpleDS}

\newcommand{\sigpy}{\mathbf{sig.py}}
\newcommand{\Data}{\mathbf{Data}}
\newcommand{\Words}{\mathbf{Words}}
\newcommand{\Signature}{\mathbf{Signature}}

\lstdefinestyle{myCustomPythonStyle}{
	language=Python,
	numbers=left,
	stepnumber=1,
	numbersep=10pt,
	tabsize=2,
	showspaces=false,
	showstringspaces=false
}

\lstdefinestyle{myCustomDataStyle}{
	language=Python,
	stepnumber=1,
	numbersep=10pt,
	tabsize=8,
	showspaces=false,
	showstringspaces=false
}

In this section,
we will make a brief description about an implementation of 
discrete signature with Python
\cite{beazley2022}.
You can see
the whole code
$\sigpy$
and the data
$\mathbf{sample1.dat}$
 in
Table \ref{tbl:InDataStram}
at
\url{https://github.com/takanori-adachi/discrete-signature}.

Let us explain the functionality of classes in 
$\sigpy$
in the following subsections.

\subsection{The class $\Data$}
\label{sec:classData}

Suppose we have a data stream like
the following tab-separated records,
 which is corresponding to
the data in
Table \ref{tbl:InDataStram}.

\lstset{basicstyle=\small,style=myCustomDataStyle}
\begin{lstlisting}
;time	event_type	value
0.0	1	1.0
0.0	2	1.0
1.0	1	3.0	
1.0	2	4.0	
1.5	2	2.0
2.5	1	5.0
3.0	1	8.0
3.0	2	6.0
\end{lstlisting}

The class $\Data$ will perform the conversion from the above data stream to the filled data
specified in Table \ref{tbl:FilledDataStram}.
It reads the input stream (raw data) from a file and stores it into a list
\texttt{self.raw\_data}
Then, collects the elements of 
$I$ (the set of event types, \texttt{self.I}), 
$\bar{I}$ (the set of extended event types, \texttt{self.barI})
and
$\mathcal{T} (time domain, \texttt{self.T}$,
converting them into the internal integer values,
and preparing dictionaries for the conversions.
It finally creates $I$-dimensional discrete path $X$, or \texttt{self.X}.

The method
\texttt{w2mi}
converts a word to a list of integers representing alphabets,
or elements of $I$
containing in the word.
Conversely, 
\texttt{mi2w}
converts a list of integers to the corresponding word.
The data member
\texttt{t2i}
is the dictionary converting from an actual time to its corresponding index.

\subsection{The class $\Words$}
\label{sec:classWards}

The class $\Words$ generates the set
$I^{\le k}$
as a list of its elements (words).
The resulting list of elements
of the set
$I^{\le k}$
is stored in
the data member 
\texttt{self.Istar}.

In the flat case,
i.e. when
$\mu = 0$,
we have
\begin{equation}
S(X)_{t_m, t_n}^{i^- \otimes w}
	=
S(X)_{t_m, t_n}^{i^+ \otimes w}
\end{equation}
for
$i \in I$
and
$w \in \bar{I}$
by Proposition \ref{prop:discSigOneTwo}.
Therefore, we can identify
$i^-$
and
$i^+$
for the first alphabet
$i \in I$.
So, we prepare a separate universe of words
\texttt{self.IstarHalf}
 for the case
$\mu = 0$.

\subsection{The class $\Signature$}
\label{sec:classSignature}

A signature is initialized with a
$\mathbf{Data}$ object \texttt{data}
and the maximum length of words \texttt{k}.
The class
$\Signature$
encapsulate the heart of the computation of discrete signatures.

\lstset{basicstyle=\small,style=myCustomPythonStyle}
\begin{lstlisting}
class Signature(object): # discrete signature
	def __init__(self, data, k):
		self.data = data
		self.k = k # maximum length of words

	def sig(self, t1, t2, w):
		return(self.sig0(data.t2i[t1], data.t2i[t2], data.w2mi(w)))
	
	def sig0(self, m, n, iss):
		v = 1.0
		if len(iss) > 0:
			w = iss[:-1]
			i = iss[len(iss)-1]
			j, s = self.i2js(i)
			if s == 0: # HEAD
				v = self.mu_delta_t[n-1] * (self.sig0(m, n-1, iss)
					+ data.delta_X[n-1,j] * self.sig0(m, n-1, w))
			else: # TAIL
				v = self.mu_delta_t[n-1] * self.sig0(m, n-1, iss)
					+ data.delta_X[n-1,j] * self.sig0(m, n, w)
		return(v)
\end{lstlisting}
where
\texttt{mu\_delta\_t[n]}
is
$
e^{
	- \mu
	(t_{n+1} - t_n)
}
$,
and
\texttt{delta\_X[n-1,j]}
is
a data member defined in the class $\Data$
as 
$
X_{t_{n+1}}^j - X_{t_n}^j
$.
The function
\texttt{i2js}
converts a given index specifying an element of
$\bar{I} = I \times \{+, -\}$
to a pair
\texttt{(j, s)}
where 
$j \in I$
and 
$ s = 0 $
if
$i = j^-$,
and
$ s = 1 $
if
$i = j^+$.

The function
\texttt{sig}
simply calls
another function
\texttt{sig0}
after converting its arguments to corresponding internal representations.
The function
\texttt{sig0}
is a straightforward implementation of equations
(\ref{eq:lambdaSigBase}),
(\ref{eq:weightSigBase}),
(\ref{eq:weightSigMinus})
and
(\ref{eq:weightSigPlus}).
Note that it uses the recursive call technique.

This simple implementation, however,
is not so efficient.
In fact,
in the recursive call of
the function 
\texttt{sig0},
it repeats computations many times
for the same arguments,
which is simply a waste of time.

In order to avoid this extra computation,
we will introduce a container object
\texttt{Signature.v}
for holding results of computation so far.


First, we introduce the container object
\texttt{Words.v}.
It
consists of 
binary and multinary tree structures.
For each word
$w \in \bar{I}^{\le k}$,
we have a pair
\begin{equation}
c_w 
	:=
(b_w, r_w),
\end{equation}
where 
$r_w$
is the value of the signature at $w$, 
and $b_w$ is a boolean value 
that indicates whether $r_w$ has been calculated or not.
The intermediate container
$v_w$
is defined by the following recursive definition.
\begin{align*}
v_{w} &:= (c_w, (
	v_{w \otimes i_1},
		\cdots
	v_{w \otimes i_{\bar{d}}}
)),
		&(\textrm{for } w \in \bar{I}^{\le {(k-1)}})
		\\
v_{w} &:= (c_w, ()),
		&(\textrm{for } w \in \bar{I}^k)
\end{align*}
where
$\bar{d}$
is the cardinality of
$\bar{I}$
and
$
\{
i_1, \cdots, i_{\bar{d}}
\}
	=
\bar{I}
$.
Then,
the container
\texttt{Words.v}
is defined by
$v_{\lambda}$.

Next, we construct a container
\texttt{Signature.v}
which is a double list of 
\texttt{Words.v}.
For each pair of time
$(t_m, t_n) \in \mathcal{T}$
with
$m < n$.
%
The function \texttt{Signature.get\_c}
retrieves 
$c_w$ 
from 
$v_{m,n}$
for the word 
$w$
whose index is \texttt{iss}.
Using
\texttt{Signature.v},
the function
\texttt{Signature.sig0}
can be rewritten as:
\lstset{basicstyle=\small,style=myCustomPythonStyle}
\begin{lstlisting}
	def sig0(self, m, n, iss): # faster algorithm using container self.v
		c = self.get_c(m, n, iss)
		if c[0]: # if already computed
			return c[1] # return its value
		# otherwise, compute from scratch
		v = 1.0
		if len(iss) > 0:
			w = iss[:-1]
			i = iss[len(iss)-1]
			j, s = self.i2js(i)
			if s == 0: # HEAD
				v = self.mu_delta_t[n-1] * (self.sig0(m, n-1, iss) 
					+ data.delta_X[n-1,j] * self.sig0(m, n-1, w))
			else: # TAIL
				v = self.mu_delta_t[n-1] * self.sig0(m, n-1, iss) 
					+ data.delta_X[n-1,j] * self.sig0(m, n, w)
		c[0] = True # it is computed
		c[1] = v # and its value is 'v'
		return(v)
\end{lstlisting}

We will use this faster version in Section
\ref{sec:appFin}.

\section{An application of discrete signature to finance}
\label{sec:appFin}

As an example of applications of discrete signature to finance,
in this section,
we consider the problem of judging 
whether a given price-shares process is of the morning or of the afternoon session in 
Tokyo Stock Exchange (TSE).
TSE has morning (9:00-11:30) and afternoon (12:30-15:00) sessions each trading day.
Therefore, each session has 2 hours and 30 minutes.

We use FLEX Full historical data bought from TSE as the raw data.
FLEX Full data consists of high frequency tick data from which 
we can extract several micro dynamic data such as \textit{ita} data
or limit order book data.
The time resolution of FLEX Full data is currently 1 microsecond, or 
$10^{-6}$ second.
In the following, time is displayed in minutes. For example,
"09:12:34.567890"
is represented by the value
$
9 \times 60 + 12 + 34.567890 / 60
	=
552.5761315
$.

\subsection{Make a one-minute interval data stream}
\label{sec:makeOneMinDS}

We extract data stream 
\begin{equation}
\mathcal{D} =
\{ D_t \}
\end{equation}
from FLEX Full data, where 
$t$
is an observed time in minutes,
and,
each 
$D_t$
consists of the following five components:
\begin{align*}
D_t.P^a
	\quad &- \quad
\textrm{best ask price},
	\\
D_t.P^b
	\quad &- \quad
\textrm{best bid price},
	\\
D_t.S^a
	\quad &- \quad
\textrm{the total of ask side shares},
	\\
D_t.S^b
	\quad &- \quad
\textrm{the total of bid side shares},
	\\
D_t.V
	\quad &- \quad
\textrm{accumulated execution volume}.
\end{align*}

We will generate a substream
of
$\{ D_t \}$
at one-minute interval for each trading session.

First, let us define index sets of one minute interval blocks from the original data
by
for
$n \in \mathbb{N} := \{0, 1, 2, \cdots \}$,
\begin{align}
J_n
	&:=
\{
	t
\mid
	n \le t < n+1
		\; \textrm{and} \;
	D_t \in \mathcal{D}
\},
	\\
\bar{J}_n
	&:=
\{
	t
\mid
	n \le t \le n+1
		\; \textrm{and} \;
	D_t \in \mathcal{D}
\}.
\end{align}

Next, define pairs of times denoting open and close times of the session.
\begin{align}
(N_0, N_1)
	&\in
\{
(9 \times 60, 11.5 \times 60),
(12.5 \times 60, 15 \times 60)
\},
		\\
N
	&:=
N_1 - N_0
	=
150.
\end{align}

If 
$
D_{t_{\max J_{N_0}}}.V = 0
$,
i.e.
the security had not been open
in the first minute of the session,
we do not use the session as data and throw it away.
By assuming
$
D_{t_{\max J_{N_0}}}.V > 0
$,
we pick $D_t$ for each 
$n = N_0, N_0+1, \cdots, N_1$,
which is called
$\bar{D}_n$,
by the following procedure:
\begin{align*}
	&
\bar{D}_{N_0}
	:=
D_{\min J_{N_0}}
	\\&
\mathbf{for}\; n \;\mathbf{in}\; \mathbf{range}(N_0+1, N_1):
	\\& \quad
	\mathbf{if}\; J_{n-1} = \emptyset:
	\quad \quad
		\bar{D}_{n} := \bar{D}_{n-1}
	\\& \quad
	\mathbf{else}:
	\quad \quad
		\bar{D}_{n} := D_{\max J_{n-1}}
\\&
\mathbf{if}\; \bar{J}_{N_1-1} = \emptyset:
\quad
	\bar{D}_{N_1} := \bar{D}_{N_1 -1}
\\& 
\mathbf{else}:
\quad
	\bar{D}_{N_1} := D_{\max \bar{J}_{N_1-1}}
\end{align*}
Then, we got a one-minute interval data stream
\begin{equation}
\{
	\bar{D}_n
\}_{n = N_0, \cdots, N_1}
\end{equation}
for each session.

\subsection{Time normalization}
\label{sec:timeNormalization}
Since our problem is to detect time-related information of the given data stream,
we will eliminate clues by normalizing the time.
The followings are normalized time and its corresponding components.
For
$n = 0, 1, \cdots, N$,
\begin{align}
t_n
	&:=
\frac{n}{N},
	\\
P^a_{t_n}
	&:=
\bar{D}_{N_0+n}.P^a,
	\\
P^b_{t_n}
	&:=
\bar{D}_{N_0+n}.P^b,
	\\
S^a_{t_n}
	&:=
\bar{D}_{N_0+n}.S^a,
	\\
S^b_{t_n}
	&:=
\bar{D}_{N_0+n}.S^b,
	\\
V_{t_n}
	&:=
\bar{D}_{N_0+n}.V.
\end{align}
Then, our time domain is
\begin{equation}
\mathcal{T}
	:=
\{
	t_0, t_1, \cdots, t_{N}
\}.
\end{equation}

\subsection{Make a discrete path for each session}
\label{sec:makeDiscPath}

We introduce some other statistics.
For 
$t \in \mathcal{T}$,
\begin{align}
p_{t}
	&:=
\ln \frac{
	P^a_t
		+
	P^b_t
}{2},
	&(\textrm{logarithm of mid-price})
	\\
s_t
	&:=
P^a_t
	-
P^b_t.
	&(\textrm{spread})
\end{align}

Next, we construct a discrete path
\begin{equation}
X := (X^1, X^2, X^3, X^4) : \mathcal{T} \to \mathbb{R}^I
\end{equation}
with
\begin{equation}
\label{eq:Ifour}
I := \{1,2,3,4\}
\end{equation}
from which we will compute its discrete signature.
For 
$t \in \mathcal{T}$,
\begin{align}
X^1_t
	&:=
\frac{
	p_t - \single{p}
}{
	\sqrt{\single{p^2} - \single{p}^2}
},
	&(\textrm{normalized logarithm of mid-price})
	\\
X^2_t
	&:=
\frac{
	s_t - \single{s}
}{
	\sqrt{\single{s^2} - \single{s}^2}
},
	&(\textrm{normalized spread})
	\\
X^3_t
	&:=
\frac{
	S^a_t - S^b_b
}{
	S^a_t + S^b_b
},
	&(\textrm{normalized imbalance})
	\\
X^4_t
	&:=
\frac{
	V_t
}{
	V_1
},
	&(\textrm{normalized accumulated volume})
\end{align}
where
$
\single{x}
	:=
\frac{1}{N}
\sum_{t \in \mathcal{T}} x_t
$
for any sequence
$
	\{x_t\}_{t \in \mathcal{T}}
$.

\subsection{Experiment and Result}
\label{sec:expResult}
In the experiment,
we used data from 
January 2020 to July 2021
for 30 names in TOPIX CORE 30.
After shuffling date,
we use 80\% of the whole data for training, 
and use 20\% for test.

The calculated signature is used to determine the morning and afternoon sessions
using logistic regression
by which
a binary decision was made, with 0 for the morning and 1 for the afternoon.

The set of event types or statistics is $I$ defined in
(\ref{eq:Ifour}).
We pick the seven sorts of feature sets as subsets of
$\bar{I}^{\le k}$,
$\overline{\{1\}}^{\le k}$,
$\overline{\{2\}}^{\le k}$, 
$\overline{\{3\}}^{\le k}$, 
$\overline{\{4\}}^{\le k}$, 
$\bar{I}^{\le k}$ itself,
$\overline{\{2,4\}}^{\le k}$ 
and
$
\{
	w \in \bar{I}^{\le k}
\mid
	w \sim /[4^-4^+]/
\}
$,
where
``$w \sim /[4^-4^+]/$''
means
``$w$ matches the pattern 
$[4^-4^+]$''.
In other words, it means
``$w$ contains the (extended) alphabets $4^-$ or $4^+$''.
We check these patterns for
$k=1, 2, 3$.

Table \ref{tbl:withSig}
shows the accuracy of logistic regression adopting 
members of the feature set as its explanatory variables.

\begin{table}[hbtp]
\caption{Computation with Signature}
\label{tbl:withSig}
\centering
\begin{tabular}{|c|r|r|r|r|r|r|}
\hline
\multirow{2}{*}{Feature set} & \multicolumn{3}{r|}{Accuracy} & \multicolumn{3}{r|}{Number of features} \\
\cline{2-7}
& $k$=1 & $k$=2 & $k$=3 & $k$=1 & $k$=2 & $k$=3 \\
\hline
$\overline{\{1\}}^{\le k}$  
& 50.72\% & 55.46\% & 55.91\% & 1 & 3 & 7 \\
$\overline{\{2\}}^{\le k}$  
& 72.51\% & 75.54\% & 83.08\% & 1 & 3 & 7 \\
$\overline{\{3\}}^{\le k}$  
& 55.04\% & 58.96\% & 59.14\% & 1 & 3 & 7 \\
$\overline{\{4\}}^{\le k}$  
& 90.18\% & 93.01\% & 97.46\% & 1 & 3 & 7 \\
$\bar{I}^{\le k}$  
& 89.63\% & 98.86\% & 99.51\% & 4 & 36 & 292 \\
$\overline{\{2,4\}}^{\le k}$  
& 89.58\% & 98.84\% & 99.82\% & 2 & 10 & 42 \\
$
\{
	w \in \bar{I}^{\le k}
\mid
	w \sim /[4^-4^+]/
\}
$  
& 90.18\% & 97.84\% & 99.55\% & 1 & 15 & 163 \\
\hline
\end{tabular}
\end{table}

The statistics ``4'' 
(normalized cumulative volume) 
apparently made the best performance, 
and the statistics ``2''
(normalized spread) 
is next.
That is why we tried the 
$\{2,4\}$ case and 
the last case that treats only words containing ``4''.
You may see that
the values of the sixth and the last cases are better than that of the whole set
$\bar{I}^{\le k}$
case
at $k=3$,
while the number of features of the 
$\{2,4\}$ case and 
the last case 
are much less than the whole set case.

\lspace

Let us mention the computation speed of obtaining the signature in 
Table \ref{tbl:withSig}.
%
The workstation we used for the computation has 2 CPUs.
Each CPU has 48 cores,
and each core can handle 2 threads.
So, the total number of threads is 192, 
which is the number of affordable distributed parallel processing.
We used 150 threads out of 192 for our computation in order to avoid overwhelming the tasks of other users.
The computation of all components of 
$\bar{I}^{\le 4}$ of the signature took 
68 minutes and 33.266 seconds.


\lspace

In order to evaluate the result in 
Table \ref{tbl:withSig}
fairly,
we also performed logistic regression using the raw data as it is without using signature
as a comparison.
The result is shown in 
Table \ref{tbl:withoutSig}.

\begin{table}[hbtp]
\caption{Computation without Signature}
\label{tbl:withoutSig}
\centering
\begin{tabular}{|l|r|r|}
\hline
Statistics & Accuracy & Number of features \\
\hline
Normalized logarithm of mid-price & 60.14\% & 151 \\
Normalized spread & 88.18\% & 151 \\
Normalized imbalance & 66.90\% & 151 \\
Normalized cumulative volume & 99.73\% & 151 \\
All & 99.64\% & 604 \\
\hline
\end{tabular}
\end{table}

One of the most important points in the comparison is
the number of features required to achieve good accuracy.
For example, in 
$k=3$ cases,
the logistic regression using all raw 604 data points
performs 99.64\% accuracy
while
the logistic regression using 42 components of the discrete signature
specified by
$\overline{\{2,4\}}^{\le 3}$  
performs 99.82\% accuracy
which is slightly better than the former case.
In other words,
the regression with the feature set specified by the signature can achieve almost the same level of good results
as the regression with the whole raw data set
with much fewer data points.

\section{Concluding Remarks}
\label{sec:conclusion}

We would like to leave a few remarks before finishing this paper.

The lead-lag transformation needs to double the cardinality $n$ of the time domain $\mathcal{T}$
while our $\head$-$\tail$ transformation needs to double the cardinality $d$ of the alphabet set $I$.
Then, the ratio of computation times of these two methods will be
$
\frac{
	d^{2 n}
}{
	(2 d)^n
}
	=
\big(
	\frac{d}{2}
\big)^n
$.
Therefore,
the lead-lag transformation
will take 
more time than ours
when
$d > 2$.

We used the pattern
``$[4^-4^+]$''
in Table \ref{tbl:withSig}
for specifying the subset of
$\bar{I}^{\le k}$.
In general,
a subset of
$I^*$
is called a language 
in Mathematical Language Theory 
\cite{sipser2013}.
There are some popular languages in this sense 
including regular languages and context-free languages.
By modifying the class $\Words$
with the Python built-in library \textbf{re}, 
we can easily extend it to handle regular languages, 
i.e. languages generated by regular expressions.
This gives us a more possibility to specify smaller and more appropriate feature sets
instead of using whole
$\bar{I}^{\le k}$  
whose cardinality is 292 when
$k = 3$
in Section \ref{sec:appFin}.




\lspace
\noindent
\textbf{Acknowledgement:}
We would like to thank to 
Research Center for Quantitative Finance, Tokyo Metropolitan University
for allowing us to use their high-speed workstation, \textit{Turing}.





\end{document}